\DeclareMathOperator*{\argmax}{arg\,max}
\newtheorem{definition}{Definition}
\newtheorem{theorem}{Theorem}
 \newenvironment{changemargin} [2]{\begin{list}{}{
          \setlength{\topsep}{0pt}\setlength{\leftmargin}{0pt}
          \setlength{\rightmargin}{0pt}
          \setlength{\listparindent}{\parindent}
          \setlength{\itemindent}{\parindent}
          \setlength{\parsep}{0pt plus 1pt}
          \addtolength{\leftmargin}{#1}\addtolength{\rightmargin}{#2}
          }\item }{\end{list}}
 \newenvironment{myitemize} 
   {
     \begin{changemargin}{-3pt}{-0cm}
     \vspace{-10pt}
     \hspace{-5pt}
     \begin{itemize}
     \setlength{\itemsep}{3pt}
   }
   {
     \end{itemize}
     \vspace{-2pt}
     \end{changemargin}
   }
\newcommand{\myref}[1]{\S\,\ref{#1}}
\newcommand{\scs}[1]{{\small {\sc #1}}\xspace}
\begin{document}
%

%
%
%
%


\title{Inspiration or Preparation?\\
Explaining Creativity in Scientific Enterprise}

%
%
%

\author{
Xinyang Zhang$^\dagger$ \quad Dashun Wang$^\ddagger$ \quad Ting Wang$^\dagger$ \\
$^\dagger$Lehigh University,  xizc15@lehigh.edu, inbox.ting@gmail.com\\
$^\ddagger$Northwestern University, dashunwang@gmail.com
}


\maketitle

\begin{displayquote}
{\em It is the function of creative man to perceive and to connect the seemingly unconnected.}
\flushright \hfill William Plomer
\end{displayquote}

\begin{abstract}

Human creativity is the ultimate driving force behind scientific progress. While the building blocks of innovations are often embodied in existing knowledge, it is creativity that blends seemingly disparate ideas. Existing studies have made striding advances in quantifying creativity of scientific publications by investigating their citation relationships. Yet, little is known hitherto about the underlying mechanisms governing scientific creative processes, largely due to that a paper's references, at best, only partially reflect its authors' actual information consumption. This work represents an initial step towards fine-grained understanding of creative processes in scientific enterprise. In specific, using two web-scale longitudinal datasets (120.1 million papers and 53.5 billion web requests spanning 4 years), we directly contrast authors' information consumption behaviors against their knowledge products. We find that, of 59.0\% papers across all scientific fields, 25.7\% of their creativity can be readily explained by information consumed by their authors. Further, by leveraging these findings, we develop a predictive framework that accurately identifies the most critical knowledge to fostering target scientific innovations. We believe that our framework is of fundamental importance to the study of scientific creativity. It promotes strategies to stimulate and potentially automate creative processes, and provides insights towards more effective designs of information recommendation platforms.
\end{abstract}


\section{Introduction}
\label{sec:introduction}

Among the many propulsions behind scientific progress, one stands out for its magnificent, yet intangible force - human creativity~\cite{Koestler:1964:book,Shadish:1994:book}. While the building blocks of innovations are often embedded in existing knowledge, it is creativity that blends seemingly disparate concepts, ideas and theories~\cite{Jones:2008:science,Evans:2011:science}. Indeed, even a theory as revolutionary as Einstein's special relativity essentially reconciles Newtonian mechanics and Maxwell's electromagnetic theory.

Since Plato's time~\cite{plato:book}, numerous studies in psychology, cognitive science and philosophy have offered a plethora of theories to explain human  creativity~\cite{Csikszentmihalyi:1996:book,Gabora:2013:cogsci}. Despite their importance, today we still lack quantitative understanding of such phenomena.
Thanks to prolific scientific publication archives (e.g., \scs{Dblp}\footnote{http://dblp.uni-trier.de},  \scs{PubMed}\footnote{http://www.ncbi.nlm.nih.gov/pubmed/},  \scs{WoS}\footnote{http://wokinfo.com}), we are now equipped with the lens to study creativity in {\em scientific enterprise} with unprecedented precision. Existing studies have focused on quantitatively gauging scientific papers' creativity by investigating their citation relationships~\cite{Uzzi:2013:science,Doboli:2014:arxiv,Kim:2015:arxiv}. These studies offer convincing evidences that a paper's creativity is measurable by examining how it blends originally disconnected knowledge. Yet, little is known hitherto about the underlying mechanism governing this creative process.

This work represents an initial step towards fine-grained understanding of creativity in scientific enterprise. We argue that, to understand the mechanism underlying scientific creative processes, solely relying on papers' citation relationships is insufficient. After all, a paper's references, at best, only partially reflect its authors' actual information consumption behaviors. First, the references may not include the most critical literature. Second, the references may not provide a holistic view of all the literature inspiring the authors. Finally, to understand the correlation between information consumption and knowledge production, it is imperative to characterize their temporal dynamics; however, the references alone do not indicate when the cited literature were actually consumed by the authors. All these limitations highlight a fundamental gap between reality and perception in current studies.

We overcome these limitations by directly contrasting authors' information consumption behaviors with their knowledge products. In specific, using two web-scale, longitudinal datasets, Microsoft Academic Graph (120.1 million papers across all scientific fields) and Indiana University Click (53.5 billion web requests spanning over 4 years), we conduct a systematic study on creative processes in scientific enterprise. Even though varied privacy and technology constraints preclude the possibility of tracking information consumption and knowledge production at an individual level, by studying their correlation at an organization level, we find remarkable predictability in scientific creative processes: of 59.0\% papers across all scientific fields, 25.7\% of their creativity can be readily explained by information consumed by their potential authors.

Moreover, leveraging these findings, we develop a predictive framework that captures the impact of authors' information consumption over their future knowledge products. Using the aforementioned datasets as an exemplary case, we demonstrate that our framework is able to accurately identify the most critical knowledge to fostering target scientific innovations.

To our best knowledge, this work is among the first to study scientific creative processes within the context of information consumption. We believe that the proposed creativity metrics, in synergy with existing measures (e.g., citation counts), lead to more comprehensive understanding of scientific publications' merits. We also believe that our framework is of fundamental importance to the study of human creativity in general. Foreseeably, it promotes strategies to stimulate and potentially automate creative processes, and provides significant insights towards more effective designs of information recommendation platforms.

The remainder of the paper proceeds as follows. \myref{sec:literature} surveys relevant literature; \myref{sec:data} describes the datasets used in our study; \myref{sec:measurement} presents a general creativity definition that subsumes existing ones; \myref{sec:explanation} explores the predictability in scientific creative processes within the context of information consumption; \myref{sec:model} details our prediction framework and develops efficient optimization algorithms; \myref{sec:measurement},~\myref{sec:explanation} and~\myref{sec:model} all conclude with empirical studies of the proposed models and algorithms; the paper is summarized in~\myref{sec:conclusion}.

\section{Related Work}
\label{sec:literature}

In this section, we review four categories of related work: assessment of creativity, scientific impact prediction, map of science, and computational creativity.

Despite numerous qualitative studies on creativity pertaining to various disciplines: psychology, cognitive science, economics and philosophy~\cite{Koestler:1964:book,Csikszentmihalyi:1996:book,Collins:1998:book,Weitzman:1998:qje,Fleming:2001:ms,Gabora:2013:cogsci}, it is only after the proliferation of scientific publication archives that it is feasible to quantitatively study creativity in scientific enterprise. One active line of inquiry is to develop meaningful creativity metrics. Uzzi et al.~\cite{Uzzi:2013:science} proposed to measure a paper's creativity as atypical pairwise combinations of its referenced work; Fleming~\cite{Fleming:2001:ms} proposed to gauge a patent's novelty using new combinations of patents in its references. This line of work offers convincing evidence that scientific work's creativity is measurable by investigating how it blends originally disconnected knowledge. However, little is known hitherto about the mechanism that triggers such connections. To our best knowledge, this work is among the first to directly bridge this gap by studying creativity within the context of information consumption.

Meanwhile, another line of work has focused on  predicting a paper's long-term impact (primarily measured by its citation count) in its early stages~\cite{Wang:2013:science,Cheng:2014:www,Dong:2015:wsdm,Li:2015:kdd} using various semantic features (e.g., author, content, venue). These studies are complementary to this work in that the useful semantic features can be integrated into our framework to train microscopic (e.g., author-, content-, and venue-specific) creativity models.

Another use of papers' reference relationships is to create citation-based {\em maps of science} or {\em knowledge flow maps}~\cite{Rosvall:2008:pnas,Guevara:2016:arxiv}, which help categorize science and understand papers' trans-disciplinary impact. However, these insights do not help explain creativity of individual scientific work.

Finally, this work is related to the broad area of computational creativity~\cite{Wiggins:2006:kbs,Smedt:2014:arxiv}, which focuses on developing artificial intelligence models that exhibit or generate creativity (e.g., problem solving~\cite{Saunders:2001:ccmcd}, visual creativity~\cite{Colton:2008:aaai}, and linguistic creativity~\cite{Veale:2007:cogsci}). In contrast, this work focuses on understanding and modeling creative processes in scientific enterprise. However, incorporating these intelligence models to enhance the predictive power of our framework would be one promising direction.

\section{Data}
\label{sec:data}

Next we describe the datasets used in our empirical study.

\begin{figure}
\centering
\epsfig{width = 120mm, file = 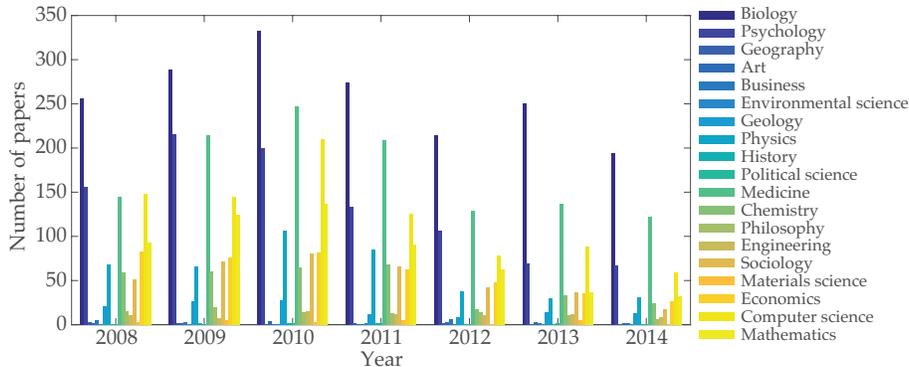}
\caption{Number of Indiana University publications in each scientific field per year. \label{fig:pubperyear}}
\end{figure}

\subsection{Raw Data}

We used two web-scale, longitudinal datasets, corresponding to information consumption and knowledge production of scientific creative processes, respectively.

\subsubsection*{Information Consumption} At present, the most comprehensive dataset that captures information consumption in scientific enterprise is perhaps the web traffic generated by researchers, reflecting how they request and access online resources (e.g., online publication archives). Therefore, in our study, we used the Indiana University Click Dataset~\cite{Meiss:2008:wsdm} (\scs{Click}), which constitutes about 53.5 billion web requests initiated by researchers at Indiana University from 09/2006 to 05/2010. This anonymized dataset was collected by applying a Berkeley Packet Filter to the web traffic passing through the border router of Indiana University and matching all the traffic containing a \scs{HTTP} \scs{GET} request.

Each request consists of the following fields: $\langle${\em timestamp, requested url, referring url, agent, flag}$\rangle$, where ``{\em agent}'' indicates whether the user agent was a browser or a bot, while ``{\em flag}'' indicates whether the request was generated inside or outside of Indiana University. All incoming or bot-generated requests have been filtered.

\subsubsection*{Knowledge Production} Meanwhile, to capture knowledge production in scientific enterprise, we used the Microsoft Academic Graph Dataset~\cite{Sinha:2015:www} (\scs{Mag}). As of 11/06/2015, this dataset consists of 120.9 million papers published in 24,843 venues across all scientific fields.

In a nutshell, the \scs{Mag} dataset is a web-scale entity graph comprising scientific publication records, citation relationships between publications, as well as their authors, author affiliations, publication venues, keywords and fields-of-study (i.e., {\em topics}). In particular, all the topics form a four-level hierarchy, with the highest and lowest levels corresponding to disciplines (e.g., ``{\em Computer science}'') and specific subjects (e.g., ``{\em Decision tree}''), respectively. Each keyword is associated with one topic in the hierarchy.

\begin{figure}
\centering
\epsfig{width = 120mm, file = 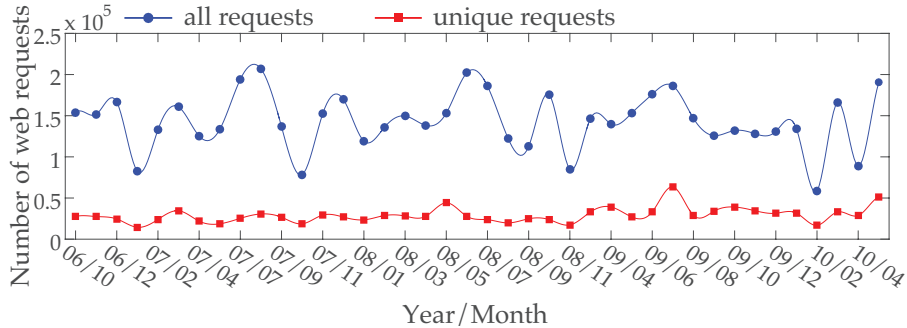}
\caption{Total number of requests and number of unique requests per month. \label{fig:pubpermon}}
\end{figure}

\subsection{Preprocessing}

Next, to match corresponding information consumption and knowledge production data, we correlate the \scs{Click} and \scs{Mag} datasets as follows.

In the \scs{Mag} dataset, we identified all the papers that have at least one author affiliated with Indiana University and were published during the period from 2007 to present, resulting in a collection of 24,399 papers.  Figure~\ref{fig:pubperyear} illustrates the number of papers in each scientific field from 2008 to 2014. It is noted that both the number and composition of papers vary significantly on a yearly basis.

Further, in the
\scs{Click} dataset, we identified all the web requests that ask for papers in the \scs{Mag} dataset by matching \scs{URLs} embedded in the requests with \scs{URLs} of the papers in the \scs{Mag} dataset. The resulting dataset consists of 5.8 million requests for 4.6 million papers (i.e., unique requests). Figure~\ref{fig:pubpermon} illustrates the total number of requests and the number of unique ones per month from 09/2006 to 05/2010. Note that while the total number of requests fluctuates wildly.


\section{Measurement \& Quantification}
\label{sec:measurement}

In this section, we present a general creativity definition and apply it to the aforementioned datasets to empirically study  scientific publications' creativity. We start by introducing a set of fundamental concepts and assumptions used throughout the paper.

\subsection{Preliminaries}

We refer to a scientific publication as a ``paper''. We assume that each paper $k$ is described by a tuple $\langle t^p_k, {\cal K}_k, {\cal C}_k \rangle$, representing its publishing time, keywords and references, respectively. Note that this model can be easily extended by including additional information (e.g., abstract, full text and publishing venue).
Further, we refer to each web request with respect to any paper in online publication archives as a ``reading''. We assume that a reading is described by a tuple $\langle k, t^r_k \rangle$, denoting the reading paper and the time of reading respectively. Note that with respect to a given paper $k$ which is requested (``read'') multiple times, we consider the median of its reading time as $t^r_k$.

Let ${\cal P}_{t, t'}$ denote the papers published within the time period from $t$ to $t'$.
In particular, ${\cal P}_{,t}$ represents all the papers published till $t$.
Similarly, denote by ${\cal Q}_{t, t'}$ the set of papers read during the time window from $t$ to $t'$.
Given that (i) for a large number of papers in the \scs{Mag} dataset, only their publishing years are specified and (ii) even with finer grained timestamps, a paper's publication often deviates from when it is actually finished, we thus use ``year'' as the default time granularity in our study. Thus, with a little abuse of notations, we use $t$ to denote both a timestamp and a one-year-long time window. For example, ${\cal P}_{t}$ represents the set of papers published in year $t$.

\begin{figure}
\centering
\epsfig{file=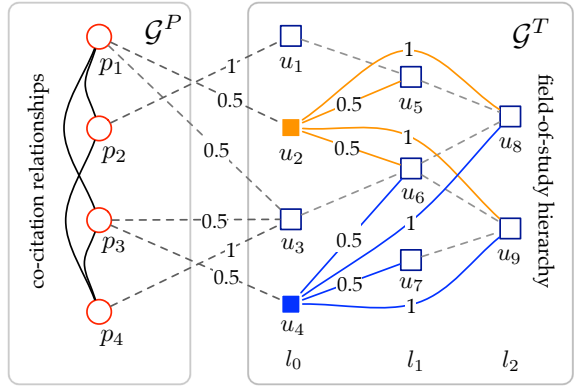, width=80mm}
\caption{A schematic example of paper and field-of-study heterogeneous network. \label{fig:trip}}
\end{figure}

\subsection{A General Creativity Definition}

A variety of creativity metrics have been proposed in literature (e.g.,~\cite{Fleming:2001:ms,Uzzi:2013:science}), all premised on the same intuition: a paper's creativity should be measured by how it blends originally disconnected knowledge. To capture this intuition, one needs to consider two factors: (i) the ``disconnect'' (denoted by $d_{i,j}$) of knowledge represented by two papers $(i,j)$, and (ii) the ``rarity'' (denoted by $r_{i,j}$) that the knowledge of $(i,j)$ has been connected in previously published papers. We define the product of disconnect and rarity,
\begin{equation}
  \label{eq:cs}
\varphi_{i,j} = d_{i,j} \cdot r_{i,j}
\end{equation}
as the creativity score of co-citation of $(i,j)$. We then introduce the following creativity metric.


\begin{definition}[Creativity]
  A paper $k$'s overall creativity $\phi_k$ is the aggregation of creativity scores contributed by all its reference pairs:
\begin{equation}
  \label{eq:cr}
  \phi_k = \ell(\{ \varphi_{i,j} \}_{i, j \in {\cal C}_k})
\end{equation}
  where $\ell(\cdot)$ represents the aggregation function (e.g., average, percentile, maximum).
\end{definition}

We require that $\ell(\cdot)$ is non-decreasing: if $\forall i,j \in {\cal C}_k$, $\varphi_{i,j} \leq \varphi_{i,j}'$, then $\ell(\{ \varphi_{i,j} \}_{i, j \in {\cal C}_k}) \leq \ell(\{ \varphi_{i,j}' \}_{i, j \in {\cal C}_k})$.
We remark that this definition subsumes a number of creativity metrics in literature. For example, the metrics in~\cite{Fleming:2001:ms,Uzzi:2013:science} only consider the rarity factor, while the measures in~\cite{Doboli:2014:arxiv} only reflect the disconnect factor.

\subsection{Rarity versus Disconnect}

Next we discuss the concrete instantiation of $\varphi_{i,j}$. A variety of forms are possible. For example, $r_{i,j}$ can be gauged using the frequency of co-occurrences of $(i,j)$'s publishing venues~\cite{Uzzi:2013:science}. Similar definitions may be given based on papers' author or affiliation information. In our study, we define $(i,j)$'s creativity score within the context of their associated topics.

Recall that each keyword in the \scs{Mag} dataset is associated with one topic. We define paper $k$'s topics (denoted by ${\cal T}_k$) as the aggregation of topics associated with its keyword set $\mathcal{K}_k$. For example, the paper ``{\em Fast algorithms for mining association rules}''~\cite{Agrawal:1994:vldb} is associated with the keyword of ``{\em association rule}'', which is mapped to the topic of ``{\em Association rule learning}''.

Let $\mathcal{P}$, $\mathcal{K}$ and $\mathcal{T}$ be the sets of papers, keywords and topics, respectively.
We introduce a heterogeneous network to describe papers' semantic relationships within the context of their topics, as shown in Figure~\ref{fig:trip}. The paper network ${\cal G}^P = ({\cal P}, {\cal E}^P)$ captures papers' co-citation relationships; each edge $(i,j)\in {\cal E}^p$ indicates that two papers $i,j \in {\cal P}$ are both referenced by some other paper(s). We now introduce the concept of rarity.

\begin{definition}[Rarity]
  The rarity of co-citation of $(i,j)$ till year $t$, $r_{i,j}^t$, is defined as follows:
  \begin{equation}
    r_{i,j}^t = \frac{1}{1 + \log_2(c_{i,j}^t+1)}
  \end{equation}
  where $c_{i,j}^t$ is the number of co-citations of $(i,j)$ till year $t$.
\end{definition}

Meanwhile, the topic network ${\cal G}^T = ({\cal T}, {\cal E}^T)$ encodes the four-level field-of-study hierarchy; a topic $u$ may have one or more parent topics ${\cal H}_u^{l}$ at each higher level $l$. For example, in Figure~\ref{fig:trip}, $u_2$ has two parents $u_5, u_6$ at level $l_1$ and two parents  $u_8, u_9$ at level $l_2$. Given $u' \in {\cal H}_u^{l}$, the weight $w_{u,u'}$ of edge $(u, u') \in {\cal E}^T$ indicates the confidence that $u$ is a sub-topic of $u'$, with $\sum_{u' \in {\cal H}_u^l}
w_{u,u'} =1$.

We define the disconnect of $(i,j)$ based on their topics $({\cal T}_i, {\cal T}_j)$. Let us start with the simplest case that $|{\cal T}_i| = |{\cal T}_j| = 1$, i.e., ${\cal T}_i = \{u\}$ and ${\cal T}_j = \{v\}$. Without loss of generality, we assume that $(u,v)$ lie at the same level of topic hierarchy. The similarity of $(u,v)$ is the aggregation of level-wise similarity of $(u,v)$ and their parents. Three properties are desirable: (i) if $(u, v)$ are identical, their similarity should be 1; (ii) the similarity $s_{u,v}^l$ of $({\cal H}_u^l, {\cal H}_v^l)$ is discounted with respect to $l$; and (iii) $s_{u,v}^l$ is counted only if $\{({\cal H}_u^{l'}, {\cal H}_v^{l'})\}_{l' < l}$ are not identical.

Thus, for $l = 0$ to $3$ (${\cal H}_u^0 = \{u\}, {\cal H}_v^0 = \{v\}$), we compute the level-$l$ similarity as:
$s_{u,v}^l = \sum_{x \in {\cal H}_u^{l} \cap {\cal H}_v^{l}} \min ( w_{u,x},   w_{v,x} )$.
Then the overall similarity of $u,v$ is given as:
\begin{displaymath}
s_{u,v} = \sum_{l = 0}^3  \max \left(1 - \sum_{l' = 0}^{l-1} s_{u,v}^{l'}, 0\right) \cdot s^l_{u, v} \cdot \sigma^l
\end{displaymath}
where the first term represents the ``budget'' after reaching level $l$ and $\sigma$ is the ``discount''.
It can be verified that this definition fulfills all three desirable properties.

For example, in Figure~\ref{fig:trip},  we compute the level-wise similarity of $(u_2, u_4)$:
$s_{u_2, u_4}^0 = 0$, $s_{u_2, u_4}^1 = \min(w_{u_2, u_6}, w_{u_4, u_6}) = 0.5$, $s_{u_2, u_4}^2 = \sum_{u = u_8, u_9} \min(w_{u_2, u}, w_{u_4, u}) = 1$. The overall similarity of $(u_2, u_4)$ is given as:
$s_{u_2, u_4} = 0 + 0.5*0.8 + (1-0.5)*1*0.8^2 = 0.72$,
if $\sigma = 0.8$.

\begin{figure}
\centering
\epsfig{width = 120mm, file = 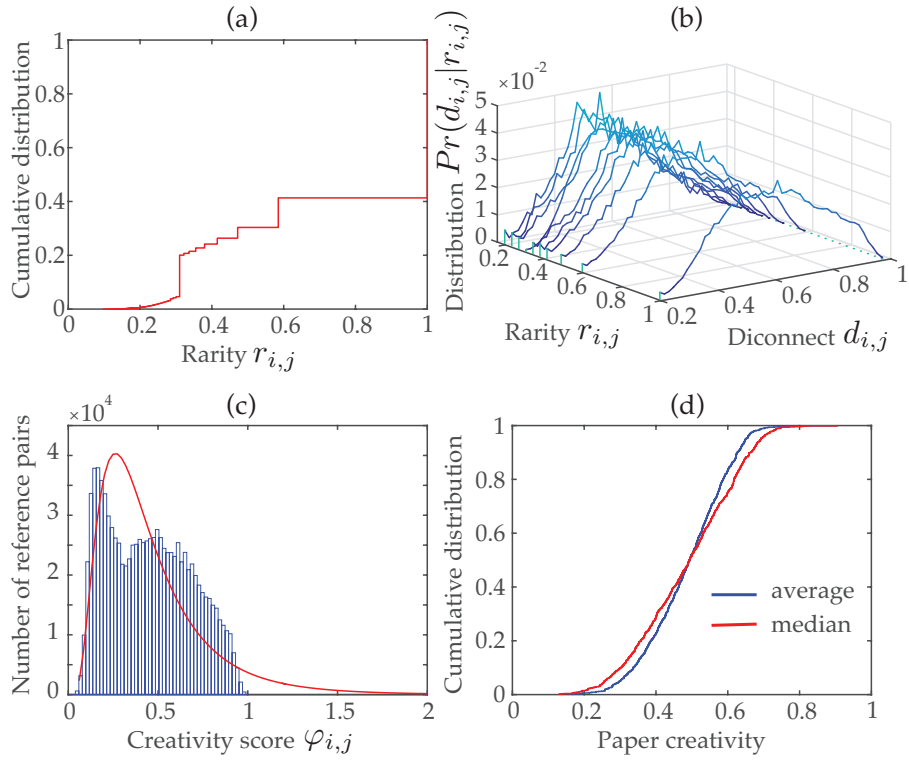}
\caption{(a) Cumulative distribution of reference pairs' rarity; (b) Distribution of reference pairs' disconnect conditional on their rarity; (c) Histogram of reference pairs' creativity scores; (d) Cumulative distribution of papers' creativity (with average and median as aggregation functions). \label{fig:dandr}}
\end{figure}

Now we are ready to introduce the disconnect metric:
\begin{definition}[Disconnect]
The disconnect of two reference papers $(i, j)$ is defined as the average dissimilarity of their topics $({\cal T}_i, {\cal T}_j)$:
\begin{equation}
  \label{eq:un}
d_{i,j} = 1- \frac{1}{|{\cal T}_i||{\cal T}_j|} \sum_{u \in {\cal T}_i, v \in {\cal T}_j} s_{u,v}
\end{equation}
\end{definition}

\subsection{Empirical Study}

Next we apply the metrics above in our empirical study. Due to space limitations, here we only show results obtained using papers published in 2011. Similar phenomena are observed regarding papers in other years.

\subsubsection*{Rarity, Disconnect and Creativity}

With a little abuse of notations, let ${\cal P}$ denote all the papers published in 2011 by Indiana University. We first measure the rarity and disconnect of all the reference pairs of ${\cal P}$, i.e., $ \{ (i,j) \in {\cal C}_k\}_{k \in {\cal P}}$. The results are shown in Figure~\ref{fig:dandr}.

Figure~\ref{fig:dandr}(a) shows the cumulative distribution of $(i,j)$'s rarity $r_{i,j}$. Note that around 60\% pairs have zero co-citation before 2011. In Figure~\ref{fig:dandr}(b), we further measure the conditional distribution of $(i,j)$'s disconnect $d_{i,j}$ on its rarity $r_{i,j}$. The conditional distribution $Pr(d_{i,j}|r_{i,j})$ is fairly similar irrespective to varying $r_{i,j}$, implying that disconnect and rarity are two critical, but complementary factors of creativity. Then, according to Eqn.(\ref{eq:cs}), we integrate $(i,j)$'s rarity and disconnect to compute its creativity score $\varphi_{i,j}$. As shown in Figure~\ref{fig:dandr}(c), the histogram of $\varphi_{i,j}$ roughly follows a lognormal distribution. It can be intuitively explained by that most reference pairs represent modestly ``common'' combinations, while both extremely ``clich\'{e}'' and ``atypical'' combinations are rare. Finally, we measure the creativity of papers in ${\cal P}$ by aggregating their reference pairs' scores. We consider both average and median as the aggregation function $\ell(\cdot)$, with results depicted in
Figure~\ref{fig:dandr}(d). Clearly, most of the papers show moderate creativity, which is consistent with the results reported in prior studies~\cite{Uzzi:2013:science}.

\begin{figure}
\centering
\epsfig{width = 120mm, file = 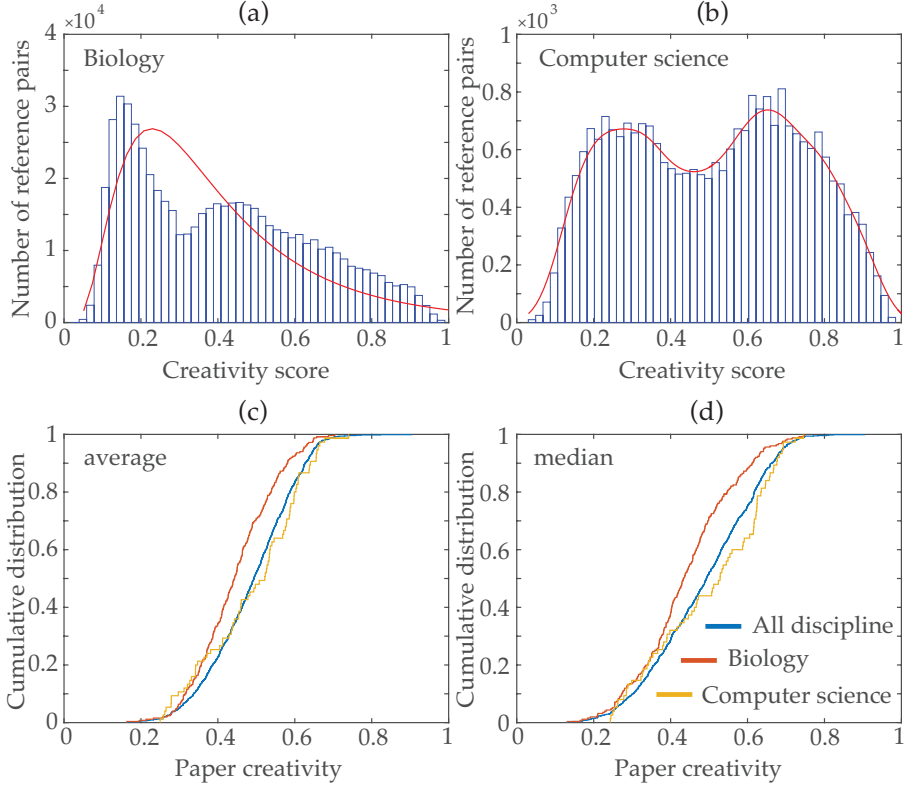}
\caption{(a)-(b) Histogram of reference pairs' creativity scores ({\em Biology} and {\em Computer science}); (c)-(d) Cumulative distribution of papers' creativity (average and median). \label{fig:disp}}
\end{figure}

\subsubsection*{Discipline-Specific Patterns}

We further investigate the discipline-specific patterns of creativity measures. Figure~\ref{fig:disp}(a)-(b) demonstrate the histogram of $(i,j)$'s creativity score $\varphi_{i,j}$ in the disciplines of {\em Biology} and {\em Computer science} respectively. Interestingly, one can observe that $\varphi_{i,j}$ roughly follows a lognormal distribution in Figure~\ref{fig:disp}(a). Meanwhile, in Figure~\ref{fig:disp}(b), $\varphi_{i,j}$ apparently follows a bimodal distribution, peaking at both low and high creativity scores. Such phenomena may be explained by that compared with {\em Biology}, {\em Computer science} is a relatively ``engineering'' discipline, featuring more frequent fusion of originally disconnected knowledge. As shown in Figure~\ref{fig:disp}(c)-(d), this difference indeed leads to the disparate creativity distributions of papers published in these two disciplines: in general, the papers in {\em Computer science} demonstrate higher creativity.

\section{Anatomy of Creativity}
\label{sec:explanation}

Equipped with the metrics introduced above, we are able to quantitatively assess the creativity of a paper, an author, an institution or even a discipline. Next we will explore the {\em predictability} of creativity by directly contrasting authors' raw information consumption behaviors against their knowledge products. We intend to answer the following key question:

\begin{quote}
{\em Which part of a paper's creativity can be explained by papers read by its authors?}
\end{quote}

\hfill

\subsection{Production-Consumption Dependency}
\label{sec:dependency}

We begin with validating the dependency of authors' publishing papers on their reading papers. In specific, for the papers published in year $t$, ${\cal P}_t$, we compare the set of prior work ${\cal C}_t$ referenced by ${\cal P}_t$ (i.e., ${\cal C}_t =  \cup_{k \in {\cal P}_t} {\cal C}_k$) against the set of reading papers ${\cal Q}_{t'}$ in an earlier year $t'$ $(t' < t)$. We measure the dependency of ${\cal P}_t$ on ${\cal Q}_{t'}$ at two distinct levels.

\begin{figure}
\centering
\epsfig{width = 120mm, file = 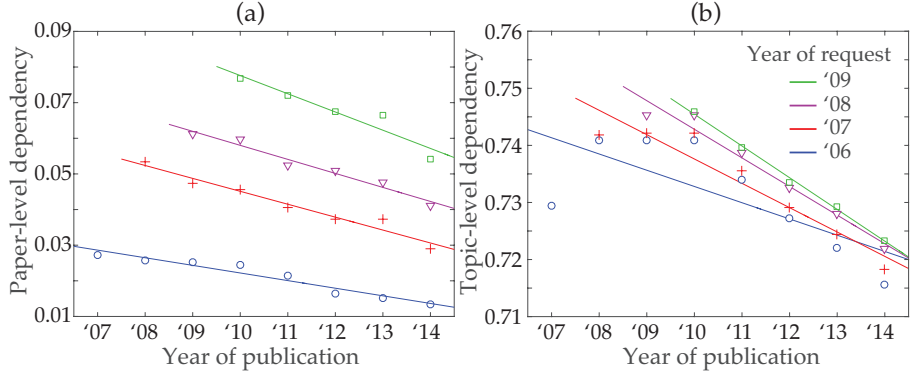}
\caption{Paper- and topic-level dependency of knowledge production on information consumption. \label{fig:dep}}
\end{figure}

\subsubsection*{Paper-Level Dependency}

We first directly compare reference set ${\cal C}_t$ against reading set ${\cal Q}_{t'}$. In specific, we measure the paper-level dependency of ${\cal P}_t$ on ${\cal Q}_{t'}$ by computing the Jaccard's coefficient of ${\cal C}_t$ and ${\cal Q}_{t'}$:
\begin{displaymath}
\frac{|{\cal C}_t  \cap  {\cal Q}_{t'} |}{\min\{  |{\cal C}_t |,  | {\cal Q}_{t'} |  \}}
\end{displaymath}

Figure~\ref{fig:dep}(a) illustrates the paper-level dependency in our datasets. As $t$ varies from 2007 to 2014, we measure the dependency of ${\cal P}_t$ on ${\cal Q}_{t'}$ for $t' (t' < t)$ ranging from 2006 to 2009.

It is observed that this paper-level dependency demonstrates interesting temporal dynamics. First, given reading set ${\cal Q}_{t'}$ (i.e., fixed $t'$), the dependency of ${\cal P}_t$ on ${\cal Q}_{t'}$ decreases as $t$ grows, implying that the impact of ${\cal Q}_{t'}$ gradually decays over time. Second, for given publication set ${\cal P}_t$ (i.e., fixed $t$), its dependency on ${\cal Q}_{t'}$ increases with $t'$, implying that more recent reading papers exert more influence over future publications.

\subsubsection*{Topic-Level Dependency}

We then examine the dependency of publication set ${\cal P}_t$ on reading set ${\cal Q}_{t'}$ at the topic level.

In specific, following the definition of topic disconnect in Eqn.(\ref{eq:un}), we compute the topic-level dependency of ${\cal P}_t$ on ${\cal Q}_{t'}$ as the average ``connectedness'' of papers in ${\cal P}_t$ and ${\cal Q
}_{t'}$:
\begin{displaymath}
1 - \frac{\sum_{i \in {\cal P}_t}\sum_{j \in {\cal Q}_{t'}} d_{i,j}}{|{\cal P}_t||{\cal Q}_{t'}|}
\end{displaymath}

The measurement results are illustrated in Figure~\ref{fig:dep}(b). We have the following observations.

The topic-level dependency shows pattens similar to the paper-level dependency: (i) the impact of reading papers over future publications decays over time; (ii) more recent reading papers exert stronger influence. Nevertheless, compared with the paper-level dependency, the topic-level dependency seems less ``stratified'' in that adjacent years show more resemble patterns. For instance, the dependency with respect to the reading papers in 2006 and 2007 shows strong similarity. Such phenomena can be explained by that authors' interests in particular topics are more stable than their interests in concrete papers.

 \subsection{A Dichotomic Theory of Creativity}
 \label{sec:theory}

The study on paper- and topic-level dependency above offers empirical evidences for our premise that authors' future publications are heavily influenced by prior literature which they have read. This also hints that it is conceivable to answer the {\em creativity prediction} question posed at the beginning of this section.

\subsubsection*{Impact of Information Consumption}

To answer this question, we introduce a mathematical model to quantify the impact of information consumption in creative processes. To motivate the rationale behind our model, let us consider the  following concrete example of three papers by three disjoint groups of authors.

\begin{myitemize}
\item $i$: {\small ``{\em Fast algorithms for mining association rules}''. R. Agrawal and R. Srikant. {\em VLDB '94}}.

\item $x$: {\small ``{\em Frequent subgraph discovery}''. M. Kuramochi and G. Karypis. {\em ICDM '01}}.

\item $j$: {\small ``{\em SPIN: mining maximal frequent subgraphs from graph databases}''. J. Huan, W. Wang, J. Prins and J. Yang. {\em KDD '04}}.
\end{myitemize}

Here paper $j$ differs from $i$ significantly. However, after reading paper $x$, one is probably able to draw the connection from $i$ to $j$ as ``{\em frequent pattern}'' $\rightarrow$ ``{\em frequent subgraph pattern}'' $\rightarrow$ ``{\em maximal frequent subgraph pattern}''. Informally, reading $x$
offers the opportunity to bridge the knowledge gap between $i$ and $j$.

\begin{figure}[h]
\centering
\epsfig{width = 60mm, file = 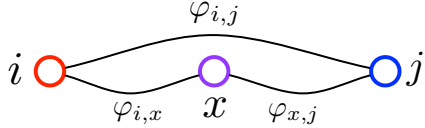}
\caption{Impact of $x$ on the creativity required to connect $(i,j)$: $\Delta^{i,j}_x = \varphi_{i,j} - \min\left(\varphi_{i,j}, \max\left(\varphi_{i,x}, \varphi_{x,j}\right)\right)$. \label{fig:jump}}
\end{figure}

We capture this intuition using the metric of creativity score defined in Eqn.(\ref{eq:cs}). The term $\varphi_{i,j}$ can be interpreted as the ``difficulty'' of connecting $(i,j)$, while the exposure to $x$ may potentially reduce this difficulty. That is, instead of directly connecting $(i, j)$, one may now first connect $i$ to $x$, and then link $x$ to $j$, as illustrated in Figure~\ref{fig:jump}.

Formally, we introduce a metric to describe the impact of paper $x$ on the creativity required to connect $(i,j)$:
\begin{equation}
\Delta^{i,j}_x = \varphi_{i,j} - \min\left(\varphi_{i,j}, \max\left(\varphi_{i,x}, \varphi_{x,j}\right)\right)
\end{equation}
where the term $ \max\left(\varphi_{i,k}, \varphi_{k,j}\right)$ represents the difficulty of connecting $(i,x)$ and $(x, j)$.

Note that this definition can be generalized to the case that $n$ papers ${\bm x} = \{x_1, x_2, \ldots, x_n\}$ collectively bridge the gap of $(i,j)$ by creating an $n$-hop ``path'' between them. Let $\varphi_{\bm x} = \max_{l =1 }^{n-1} \varphi_{x_{l}, x_{l+1}}$. We may define the impact of $\bm{x}$ to $(i,j)$ as:
$\Delta^{i,j}_{\bm x} = \varphi_{i,j} - \min\left(\varphi_{i,j}, \max\left(\varphi_{i,x_1}, \varphi_{\bm x} , \varphi_{x_n, j}\right)\right)$.
Due to space limitations, in the following discussion we focus on the case of $n =1$.

\subsubsection*{Preparation versus Inspiration}

We are now ready to perform an anatomy of the creativity $\varphi_{i,j}$ required to connect the knowledge represented by two papers $(i,j)$. For simplicity, consider the case that during this creative process, the authors read a single paper $x$. We  divide $\varphi_{i,j}$ into two complementary parts.
\begin{myitemize}
\item ``Preparation'' - the marginal reduction in the difficulty of connecting $(i,j)$, due to the authors' reading, which is quantified by $\Delta^{i,j}_x$.
\item
``Inspiration'' - the part of creativity that cannot be explained by the authors' reading, which is quantified by
$(\varphi_{i,j} - \Delta^{i,j}_x)$.
\end{myitemize}

Since a given paper blends the knowledge in all its references, a dichomatic theory naturally follows. That is, a paper's creativity reflects a superimpose of both effects:
\begin{quote}
\boxed{{\bf Creativitiy} = {\bf Preparation} + {\bf Inspiration}}
\end{quote}

Formally, consider a paper $k$ with ${\cal C}_k$ as its reference set. Assume that $k$'s authors have read a set of papers ${\cal Q}$ during their creative process. As given in Eqn.(\ref{eq:cr}), $k$'s overall creativity is assessed by: $\phi_k =\ell(\{\varphi_{i,j}\}_{i,j \in {\cal C}_k})$. For each pair $i, j \in {\cal C}_k$, we identify a reading paper in ${\cal Q}$ that maximally impacts $\varphi_{i,j}$ and compute its impact as:
\begin{displaymath}
\Delta_{\cal Q}^{i,j} = \max_{x \in {\cal Q}}\Delta^{i,j}_x
\end{displaymath}

\begin{definition}[Enabler]
For reference pair $i,j \in {\cal C}_k$, the paper $x^*$ in reading set ${\cal Q}$  that maximally impacts $\varphi_{i,j}$, $x^* = \arg\max_{x \in {\cal Q}}\Delta^{i,j}_x$ is called an enabler.
\end{definition}

Thus, after discounting the preparation quantity embodied in ${\cal Q}$, the inspiration of $k$, $\chi_k$, is measured as:
\begin{displaymath}
{\chi}_k = \ell(\{\varphi_{i,j} - \Delta^{i,j}_{{\cal Q}}\}_{i,j \in {\cal C}_k})
\end{displaymath}
while the preparation quantity $\psi_k$ is computed as the difference of creativity and inspiration measures:
\begin{displaymath}
\psi_k = \ell(\{\varphi_{i,j}\}_{i,j \in {\cal C}_k}) -\ell(\{\varphi_{i,j} - \Delta^{i,j}_{{\cal Q}}\}_{i,j \in {\cal C}_k})
\end{displaymath}

\begin{figure}
\centering
\epsfig{width = 120mm, file = 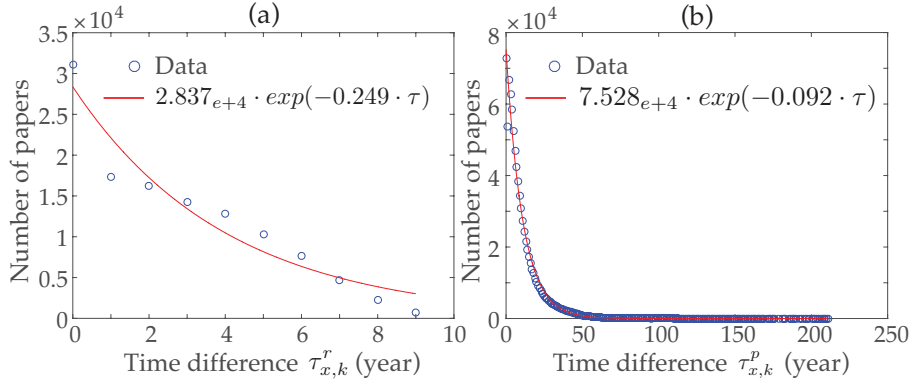}
\caption{Temporal dynamics of information consumption-production. (a) Histogram of consumption-publication temporal interval $\tau_{x,k}^r$; (b) Histogram of publication-publication temporal interval $\tau^p_{x,k}$. \label{fig:correlation}}
\end{figure}

\subsubsection*{Temporal Dynamics}

In the creativity theory above, the definition of ${\cal Q}$ is critical for accurately quantifying preparation and inspiration. While one may regard all the papers read by given authors in the past as ${\cal Q}$, this simplification ignores the temporal dynamics of information production-consumption dependency observed in~\myref{sec:dependency}, resulting in an overestimation of preparation quantities. Here we will address this issue.

It is extremely difficult to directly assess the temporal dynamics of a reading paper $x$'s impact over a future publishing paper $k$. However, by studying in general how authors access papers in the intersection of reference and reading sets (i.e., ${\cal C}_k \cap {\cal Q}$), we are able to build a surrogate temporal dynamics model.

\begin{figure}
\centering
\epsfig{width = 120mm, file = 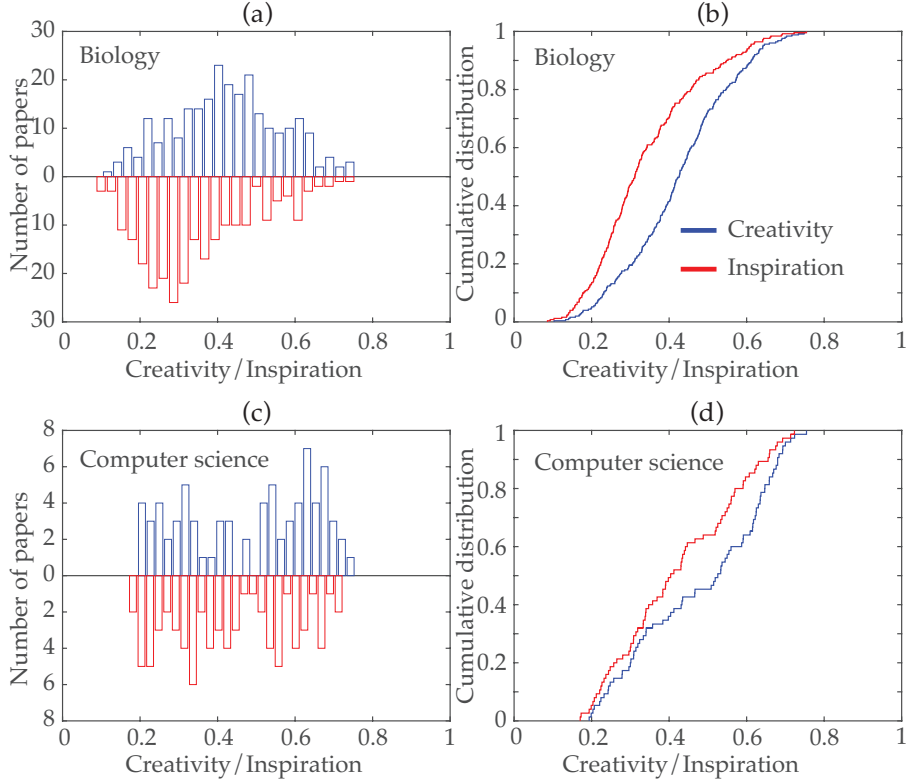}
\caption{(a)-(c) Papers' creativity and inspiration in the disciplines of {\em Biology} and {\em Computer science}; (b)-(d) Cumulative distribution of papers' creativity and inspiration in the disciplines of {\em Biology} and {\em Computer science}. \label{fig:diff}}
\end{figure}

Specifically, for $x \in {\cal C}_k \cap {\cal Q}$, we examine $x$'s reading time ($t^r_x$) and $k$'s publishing time ($t^p_k$). The difference ($\tau^r_{x,k} = t^p_k - t^r_x$) reflects the temporal interval between information consumption and knowledge production. Figure~\ref{fig:correlation}(a) shows that this interval follows a log-log distribution, which is consistent with the studies on papers' ``aging'' phenomena~\cite{Dorogovtsev:2003:lognormal}. Let $Pr(\tau)$ represent this distribution. By discretizing and normalizing $Pr(\tau)$, we compute the probability that a reading paper influences a future paper published $\Delta t$ years later, i.e., $m(\Delta t) = \int_{\tau = \Delta t}^{+\infty} Pr(\tau) \textrm{d}\tau$.

Putting everything together, Algorithm~\ref{alg:sampling} sketches how to compute a given paper $k$'s preparation and inspiration quantities. For each reference pair $i,j \in {\cal C}_k$, the papers in reading set ${\cal Q}$ are ranked according to their impact to the creativity score of $(i,j)$; the current top one $x^*$ is picked with probability $m(\tau^r_{x^*,k})$; if $x^*$ is not picked, it moves to the next most impactful one and repeats this process; finally, $k$'s preparation and inspiration measures are computed by aggregating the creativity scores of all reference pairs.

\begin{algorithm}
  \SetAlgorithmName{Algorithm}{List of operations}
  \SetAlgoLined
  \SetSideCommentRight
  \SetNoFillComment
  \KwIn{publishing paper $k$, reading papers ${\cal Q}$}
  \KwOut{preparation and inspiration of $k$}


  \For{each pair of $i,j \in {\cal C}_k$}{
  \tcc{\small preparation and inspiration to connect $i,j$}
     candidate references $\tilde{{\cal Q}} \leftarrow {\cal Q}$\;
     \While{true}{
     find $x^* = \arg\max_{x \in \tilde{{\cal Q}}} \Delta_{x}^{i,j}$\;
     remove $x^*$ from $\tilde{\cal Q}$\;
     \tcc{\small temporal dynamics-based Bernoulli trial}
     pick $x^*$ with probability $m(\tau_{x^*,k}^r)$\;
     \lIf{$x^*$ is picked}{break}
     }
     compute $\Delta_{\cal Q}^{i,j} \leftarrow \Delta_{x^*}^{i,j}$\;
  }
  \tcc{\small aggregate at paper level}
  $\chi_k$ $\leftarrow$ $\ell(\{\varphi_{i,j} - \Delta^{i,j}_{{\cal Q}}\}_{i,j \in {\cal C}_k})$\;
  $\psi_k$ $\leftarrow$ $\ell(\{\varphi_{i,j}\}_{i,j \in {\cal C}_k}) -\ell(\{\varphi_{i,j} - \Delta^{i,j}_{{\cal Q}}\}_{i,j \in {\cal C}_k})$\;
   \caption{Explaining creativity in scientific work \label{alg:sampling}}
\end{algorithm}

\subsection{Empirical Study}

Next we apply this theory to explain the creativity of papers in our datasets.

\subsubsection*{Inspiration-Preparation Decomposition}

Applying Algorithm~\ref{alg:sampling}, we first assess the preparation and inspiration quantities of papers published in 2011 (similar phenomena are observed in other years). In particular, we examine the papers in the disciplines of {\em Biology} and {\em Computer science} separately.

Figure~\ref{fig:diff} contrasts a paper $k$'s overall creativity $\phi_k$ and inspiration $\chi_k$. Specifically, Figure~\ref{fig:diff}(a)-(c) show the histograms of $\phi_k$ and $\chi_k$ regarding the papers in {\em Biology} and {\em Computer science}, respectively; Figure~\ref{fig:diff}(b)-(d) further compare their cumulative distributions. It is observed in both cases, $\psi_k$ accounts for a sizable portion of $\phi_k$. Table~\ref{tab:anatomy} summarizes the impact of $\psi_k$. We examine the papers whose creativity measures drop after the preparation is taken into account. For example, of about 59.0\% papers across all scientific fields, 25.7\% of their creativity can be explained by information consumed by their authors. Also interestingly, the quantity of preparation varies with specific disciplines: it counts for 21.9\% and 29.2\% of creativity in the disciplines of {\em Computer science} and {\em Biology}, respectively. This variance may be explained by that in a more established discipline as {\em Biology}, authors need to ground their research in existing work more profoundly.

\begin{table}[h]{\small
\centering
\begin{tabular}{c|c|c|c}
\bf{Discipline} & \bf{Papers w. $\downarrow$ (\%)} & \bf{Avg. $\downarrow$} & \bf{Avg. $\downarrow$ (\%)}\\
\hline
\hline
Biology & 64.54\% & 0.132 & 29.18\%\\
\hline
Computer science & 48.00\% & 0.123 & 21.88\%\\
\hline
All disciplines & 58.96\% & 0.133 & 25.67\%\\
\hline
\end{tabular}
\caption{Impact of preparation over overall creativity. \label{tab:anatomy}}}
\end{table}

\subsubsection*{Case Study}

We further explore the practical use of this creativity theory. Here we report one concrete case.

The target paper $k$: {\small ``{\em Engaging online learners: the impact of web-based learning technology on college student engagement}''} cites two papers
$i$: {\small ``{\em A comprehensive look at online student support services for distance learners}''} and $j$: {\small ``{\em Do computers enhance or detract from student learning?}''}. The reading paper $x$: {\small ``{\em The convergent and discriminant validity of NSSE scalelet scores}''} reduces $\varphi_{i,j}$ from 0.661 to 0.495.

At a first glance, $x$ is not relevant to either $i$ or $j$. Yet, after manually examining the full text of these papers, one may notice the following connection: to build an online learning environment ($i$'s knowledge), it is critical to effectively assess student engagement in this new environment ($j$'s knowledge), while $x$ indeed provides a metric to evaluate student engagement.

\subsubsection*{Relationships with Long-Term Impact}

Finally, we investigate the relationships of preparation and inspiration with other important metrics of scientific work. In particular, we focus on a paper's citation count, the de facto metric of its long-term impact~\cite{Wang:2013:science,Cheng:2014:www,Dong:2015:wsdm,Li:2015:kdd}.

Figure~\ref{fig:influence} plots the correlation of a paper $k$'s preparation $\psi_k$ and inspiration $\chi_k$ with its citation count $c_k$. Again, we examine papers in the disciplines of {\em Biology} and {\em Computer science} separately. In both cases, $c_k$ is positivity correlated with $\psi_k$, implying that a paper grounded deeper into exiting work is better received by the research community. More interesting is the correlation of $c_k$ and $\chi_k$: it is positive in the case of {\em Computer science}, yet slightly negative in the case of {\em Biology}. This may be explained by that as a younger and more vibrant discipline, the {\em Computer science} community tends to be more welcoming to radical ideas that blend previously disconnected knowledge.

Clearly, the creativity, preparation and inspiration metrics provide a new perspective to assess scientific publications' merits. We envision that in synergy with other metrics (e.g., citation count), they may lead to more comprehensive understanding of long-term impact of scientific work.

\begin{figure}
\centering
\epsfig{width = 120mm, file = 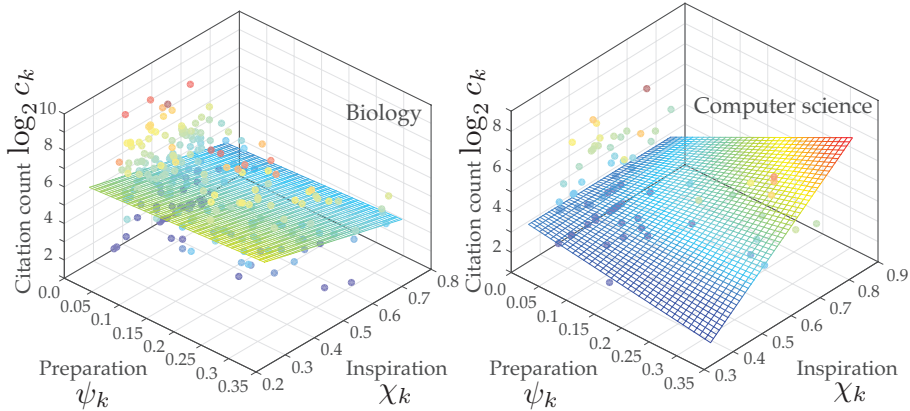}
\caption{Relationships of preparation and inspiration with citation counts. \label{fig:influence}}
\end{figure}

\section{Prediction of Enablers}
\label{sec:model}

In this section, levering the insights derived from our empirical study, we develop a predictive model that is able to identify the most promising enablers with respect to target papers. We then present efficient optimization algorithms for this model.

\subsection{Problem Formulation}

Without loss of generality, we start with the setting of a single target paper, and will discuss the extension to multiple target papers shortly. Let $k$ denote the target paper, which connects a set of disparate knowledge, as represented by a set of references ${\cal C}_k$. Among all existing literature ${\cal S}$, we intend to find a minimum set of reading papers ${\cal A} \subseteq {\cal S}$ that maximally facilitate to connect ${\cal C}_k$.

More formally, our goal is to solve the following optimization problem:
\begin{equation}
\label{eq:max}
\max_{{\cal A} \subseteq {\cal S}} \ell(\{\varphi_{i,j}\}_{i,j \in {\cal C}_k}) -\ell(\{\varphi_{i,j} - \Delta^{i,j}_{{\cal A}}\}_{i,j \in {\cal C}_k})
\end{equation}
with the constraint of $|{\cal A}|\leq \rho$ ($\rho$ as the threshold).

Let $R({\cal A}) \triangleq  \ell(\{\varphi_{i,j}\}_{i,j \in {\cal C}_k}) -\ell(\{\varphi_{i,j} - \Delta^{i,j}_{{\cal A}}\}_{i,j \in {\cal C}_k})$ and $c({\cal A}) \triangleq |{\cal A}|$. We can simplify this problem as:
\begin{displaymath}
\max_{{\cal A} \subseteq {\cal S}} R({\cal A}) \quad\textrm{subject to}\quad  c({\cal A})\leq \rho
\end{displaymath}

We may interpret $R({\cal A})$ and $c({\cal A})$ respectively as the ``reward'' and ``cost'' functions, and $\rho$ as the ``budget'' that one is allowed to spend. Next we discuss its optimization.

\subsection{Properties of Objective Function}

It is noticed that the objective function $R({\cal A})$ in Eqn.(\ref{eq:max}) possesses a set of interesting properties.
\begin{myitemize}
\item Non-negative - $R(\emptyset) = 0$; that is, $k$'s creativity does not change if no reading paper is taken into account.

\item Non-decreasing - If ${\cal A}\subseteq {\cal B} \subseteq {\cal S}$, then $R({\cal A})\leq R({\cal B})$. Intuitively, reading more papers can only reduce $k$'s creativity.

\item Diminishing return - Given ${\cal A}\subseteq {\cal B} \subseteq {\cal S}$, reading one more paper in addition to ${\cal A}$ improves $R(\cdot)$ at least as much as reading it in addition to ${\cal B}$. More formally,
\end{myitemize}

\begin{theorem}
Given ${\cal A}\subseteq {\cal B} \subseteq {\cal S}$ and a paper $x^* \in {\cal S}\setminus {\cal B}$, it holds that
\begin{displaymath}
  R({\cal A} \cup \{x^*\}) - R({\cal A}) \geq R({\cal B} \cup \{x^*\}) - R({\cal B})
\end{displaymath}
Thus, $R(\cdot)$ is a submodular function~\cite{Nemhauser:1978}.
\end{theorem}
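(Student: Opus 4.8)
The plan is to establish submodularity of $R(\cdot)$ by reducing it to a pointwise statement about each reference pair $(i,j)$ and then invoking the fact that a monotone aggregation of submodular set functions preserves submodularity (when $\ell$ is itself monotone and, as used here, the relevant composition structure is respected). First I would fix an arbitrary reference pair $i,j \in {\cal C}_k$ and study the set function ${\cal A} \mapsto \Delta^{i,j}_{\cal A} = \max_{x \in {\cal A}} \Delta^{i,j}_x$, where $\Delta^{i,j}_x = \varphi_{i,j} - \min\bigl(\varphi_{i,j}, \max(\varphi_{i,x}, \varphi_{x,j})\bigr)$. Each $\Delta^{i,j}_x$ is a nonnegative constant depending only on $x$, so $\Delta^{i,j}_{\cal A}$ is the maximum of a collection of nonnegative weights indexed by the elements of ${\cal A}$. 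It is classical that a function of the form ${\cal A} \mapsto \max_{x \in {\cal A}} g(x)$ (with $\max_{\emptyset} = 0$) is monotone and submodular: adding $x^*$ to a smaller set ${\cal A}$ raises the running maximum by $\max(g(x^*) - \max_{x \in {\cal A}} g(x), 0)$, which is at least the corresponding increment $\max(g(x^*) - \max_{x \in {\cal B}} g(x), 0)$ for the larger set ${\cal B} \supseteq {\cal A}$, since the subtracted term only grows. This gives the per-pair diminishing-return inequality
\begin{displaymath}
\Delta^{i,j}_{{\cal A}\cup\{x^*\}} - \Delta^{i,j}_{\cal A} \;\geq\; \Delta^{i,j}_{{\cal B}\cup\{x^*\}} - \Delta^{i,j}_{\cal B} \;\geq\; 0.
\end{displaymath}

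Next I would lift this to the paper level. Write $R({\cal A}) = \ell(\{\varphi_{i,j}\}) - \ell(\{\varphi_{i,j} - \Delta^{i,j}_{\cal A}\})$; the first term is a constant independent of ${\cal A}$, so submodularity of $R$ is equivalent to supermodularity of ${\cal A} \mapsto \ell(\{\varphi_{i,j} - \Delta^{i,j}_{\cal A}\}_{i,j})$, i.e. to the claim that its marginal decrease is larger on bigger sets. Since $\ell$ is non-decreasing (as required just after the Creativity definition) and each coordinate $\varphi_{i,j} - \Delta^{i,j}_{\cal A}$ is non-increasing in ${\cal A}$, adding $x^*$ can only decrease $\ell(\cdot)$, giving the non-decreasing property of $R$; and $R(\emptyset) = 0$ is immediate because $\Delta^{i,j}_\emptyset = 0$, so the two $\ell$ terms coincide. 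For the diminishing-return inequality I would compare the two coordinatewise vectors obtained by adding $x^*$ to ${\cal A}$ versus to ${\cal B}$: by the per-pair inequality above, each coordinate drops by at least as much when starting from ${\cal A}$ as when starting from ${\cal B}$, and the ${\cal A}$-vector already dominates the ${\cal B}$-vector coordinatewise. The conclusion $R({\cal A}\cup\{x^*\}) - R({\cal A}) \geq R({\cal B}\cup\{x^*\}) - R({\cal B})$ then follows from monotonicity of $\ell$ applied to these nested coordinatewise comparisons.

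The main obstacle is the last step: for a general non-decreasing aggregator $\ell$, coordinatewise domination of the marginal decrements does not by itself force domination of the decrement in $\ell$-value — one genuinely needs a little more structure (e.g. that $\ell$ is a concave, coordinatewise-nondecreasing function of the $\varphi$-values, which holds for average and for percentile/maximum in the relevant regime). So the care in the writeup goes into stating precisely which class of $\ell$ the theorem covers and verifying that average, median/percentile and maximum all lie in it; for the common case $\ell = \text{average}$ the argument collapses to linearity and the per-pair submodularity sums directly, which I would present first as the clean special case before addressing the general $\ell$.
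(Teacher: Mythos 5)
Your per-pair argument is essentially the paper's own: the paper establishes $\Delta^{i,j}_{{\cal A}\cup\{x^*\}} - \Delta^{i,j}_{{\cal A}} \geq \Delta^{i,j}_{{\cal B}\cup\{x^*\}} - \Delta^{i,j}_{{\cal B}}$ by a three-case analysis on $x' = \arg\max_{x\in{\cal B}\cup\{x^*\}}\Delta^{i,j}_x$, which is just an unpacked form of your observation that ${\cal A}\mapsto\max_{x\in{\cal A}}g(x)$ is monotone and submodular with marginal increment $\max\bigl(g(x^*)-\max_{x\in{\cal A}}g(x),\,0\bigr)$; the two justifications are interchangeable.

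The real divergence is the aggregation step, and here your caution is not pedantry but a correct diagnosis of a gap in the paper itself. The paper passes from the per-pair inequality to $R({\cal A}\cup\{x^*\})-R({\cal A})\geq R({\cal B}\cup\{x^*\})-R({\cal B})$ by asserting only that $\ell$ is non-decreasing; as you say, coordinatewise domination of the decrements does not transfer through a general monotone aggregator. In fact the theorem as stated fails for $\ell=\max$: with two reference pairs of scores $\varphi_1=\varphi_2=1$ and two candidate papers $x,y$ with $\Delta^1_x=1,\ \Delta^1_y=0,\ \Delta^2_x=0,\ \Delta^2_y=1$, one gets $R(\emptyset)=R(\{x\})=R(\{y\})=0$ but $R(\{x,y\})=1$, so the marginal gain of $x$ is strictly larger on the superset $\{y\}$ than on $\emptyset$. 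So restricting the class of $\ell$ is genuinely necessary, and your clean special case $\ell=\text{average}$, where linearity lets the per-pair submodular functions sum, is the one that actually goes through. One correction to your proposed repair: concavity is not the right condition and does not rescue maximum or percentile ($\max$ is convex, and even concave aggregators such as $\min$ fail the required decrement comparison); the honest statement is the theorem for linear (weighted-average) $\ell$, or the per-pair statement aggregated linearly. In short, your proof matches the paper wherever the paper is rigorous, and the step you flag as needing extra structure is exactly the step the paper glosses over.
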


\begin{proof}

First note that in $R(\cdot)$, for a given target paper $k$, $\ell(\{\varphi_{i,j}\}_{i,j \in {\cal C}_k})$ is input-independent. We thus focus our discussion on $\ell(\{\varphi_{i,j} - \Delta^{i,j}_{\cdot}\}_{i,j \in {\cal C}_k})$. Let us consider ${\cal A}\subseteq {\cal B} \subseteq {\cal S}$ and a reading paper $x^* \in {\cal S}\setminus {\cal B}$.

For ease of exposition, we assume that ${\cal A}$ (and ${\cal B}$) contains a special element $\oslash$, which corresponds to the case that no paper is selected from ${\cal A}$ (and ${\cal B}$). For each pair $i,j \in {\cal C}_k$, $\Delta_{\cal A}^{i,j} = \max_{x \in {\cal A}} \Delta^{i,j}_x$ and
$\Delta_{\cal B}^{i,j} = \max_{x \in {\cal B}} \Delta^{i,j}_x$. Since ${\cal A}\subseteq {\cal B}$, it holds that $\Delta_{\cal A}^{i,j} \leq \Delta_{\cal B}^{i,j}$. Let $x'$ = $\arg\max_{x \in {\cal B} \cup \{x^*\}} \Delta^{i,j}_{x}$. We consider the following three cases.

(i) $x^* = x'$. It is clear that $\Delta^{i,j}_{{\cal A} \cup \{x^*\}}$
= $\Delta^{i,j}_{{\cal B} \cup \{x^*\}}$ = $\Delta^{i,j}_{x^*}$; thus $\Delta^{i,j}_{{\cal A} \cup \{x^*\}} - \Delta^{i,j}_{{\cal A}}$ $\geq$ $\Delta^{i,j}_{{\cal B} \cup \{x^*\}} - \Delta^{i,j}_{{\cal B}}$.

(ii) $x^* \neq x'$ and $x' \in {\cal A}$. This implies $\Delta^{i,j}_{{\cal B} \cup \{x^*\}}$ = $\Delta^{i,j}_{{\cal A} \cup \{x^*\}}$ = $\Delta^{i,j}_{{\cal B}}$ = $\Delta^{i,j}_{{\cal A}}$. Thus, $\Delta^{i,j}_{{\cal A} \cup \{x^*\}} - \Delta^{i,j}_{{\cal A}}$ $=$ $\Delta^{i,j}_{{\cal B} \cup \{x^*\}} - \Delta^{i,j}_{{\cal B}}$.

(iii) $x^* \neq x'$ and $x' \not\in {\cal A}$. This implies $\Delta^{i,j}_{{\cal B} \cup \{x^*\}}$ = $\Delta^{i,j}_{{\cal B}}$ and $\Delta^{i,j}_{{\cal A} \cup \{x^*\}} \geq \Delta^{i,j}_{{\cal A}}$, i.e., $\Delta^{i,j}_{{\cal A} \cup \{x^*\}} - \Delta^{i,j}_{{\cal A}}$ $\geq$ $\Delta^{i,j}_{{\cal B} \cup \{x^*\}} - \Delta^{i,j}_{{\cal B}}$.

In all three cases above, it holds that $\Delta^{i,j}_{{\cal A} \cup \{x^*\}} - \Delta^{i,j}_{{\cal A}}$ $\geq$ $\Delta^{i,j}_{{\cal B} \cup \{x^*\}} - \Delta^{i,j}_{{\cal B}}$. Because the aggregation function $\ell(\cdot)$ (e.g., average, percentile, maximum) is non-decreasing (cf~\myref{sec:measurement}), it leads to $R({\cal A} \cup \{x^*\}) - R({\cal A}) \geq R({\cal B} \cup \{x^*\}) - R({\cal B})$.
\end{proof}

\begin{algorithm}
  \SetAlgorithmName{Algorithm}{List of operations}
  \SetAlgoLined
  \SetSideCommentRight
  \SetNoFillComment
  \KwIn{target paper $k$, existing literature ${\cal S}$, budget $\rho$}
  \KwOut{minimum set of enablers ${\cal A}$}

  ${\cal A} \leftarrow \emptyset$\;
  \For{$s = 1,\ldots,\rho$}{
  \tcc{\small greedy approach to find the next enabler}
  \For{each $x \in  {\cal S} \setminus {\cal A}$}{
    compute $R({\cal A} \cup \{x\}) - R({\cal A})$\;
  }
  \While{true}{
      find $x^* = \argmax_{x \in {\cal S} \setminus {\cal A}} R({\cal A} \cup \{x\}) - R({\cal A})$\;
      \tcc{\small temporal dynamics of information production-consumption dependency}
      pick $x^*$ with probability $m(\tau_{x^*,k}^p)$\;
      \lIf{$x^*$ is picked}{break}
      \lElse{remove $x^*$ from ${\cal S}$}
    }
  \lIf{$R({\cal A} \cup \{x^*\}) - R({\cal A}) = 0$}{break}
  ${\cal A} = {\cal A} \cup \{x^*\}$\;
  }
  return ${\cal A}$\;
  \caption{Finding minimum set of enablers \label{alg:sub}}
\end{algorithm}

\subsection{Optimization Algorithm}

In general, maximizing a submodular function is known to be NP-hard~\cite{Nemhauser:1978}. Yet, as each paper $x \in {\cal S}$ has unit cost, a greedy algorithm is applicable, which provides near-optimal guarantee for the found results. Next we introduce such an algorithm, as sketched in Algorithm~\ref{alg:sub}.

Let ${\cal A}_s$ denote the set of selected papers after the $s$-th step. Starting with an empty set ${\cal A}_0 = \emptyset$, and iteratively, at the $s$-th step, it adds the paper $x^* \in {\cal S}$ to ${\cal A}_{s-1}$, such that the following marginal gain is maximized:
\begin{displaymath}
x^* = \argmax_{x \in {\cal S} \setminus {\cal A}_{s-1}} R({\cal A}_{s-1} \cup \{x\}) - R({\cal A}_{s-1})
\end{displaymath}
This process stops if the marginal gain reaches zero or the budget is used up.

Algorithm~\ref{alg:sub} provides the following near-optimal guarantee for the found enablers.
\begin{theorem}[\cite{Nemhauser:1978}]
Given that the function $R(\cdot)$ is submodular, nondecreasing, and $R(\emptyset) = 0$, then the greedy algorithm finds ${\cal A}$, such that $R({\cal A}) \geq (1-1/e)\max_{|{\cal A}'|=\rho}R({\cal A}')$.
\end{theorem}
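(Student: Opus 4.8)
The final statement to prove is Theorem~2, the $(1-1/e)$ approximation guarantee for the greedy algorithm maximizing a submodular, non-decreasing function with $R(\emptyset)=0$ subject to a cardinality constraint. This is the classical Nemhauser--Wolsey--Fisher result, already cited in the excerpt, so the plan is to give the standard telescoping argument rather than anything novel.

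\textbf{Approach.} Let ${\cal A}^* $ be an optimal solution with $|{\cal A}^*|=\rho$ and optimal value $\textsc{opt}=R({\cal A}^*)$. Let ${\cal A}_0 \subseteq {\cal A}_1 \subseteq \cdots \subseteq {\cal A}_\rho$ be the sets produced by the greedy algorithm after each step (if greedy terminates early because the marginal gain is zero, non-negativity and the argument below still go through, since at that point $R({\cal A}_s)$ already equals $\textsc{opt}$). The heart of the proof is the single-step inequality
\begin{displaymath}
\textsc{opt} - R({\cal A}_s) \;\le\; \rho \bigl( R({\cal A}_{s+1}) - R({\cal A}_s) \bigr),
\end{displaymath}
i.e., the greedy step closes at least a $1/\rho$ fraction of the remaining gap.

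\textbf{Key steps.} First I would establish the single-step inequality. By non-decreasingness and $R(\emptyset)=0$ we have $\textsc{opt} = R({\cal A}^*) \le R({\cal A}_s \cup {\cal A}^*)$. Writing ${\cal A}^* \setminus {\cal A}_s = \{y_1,\dots,y_m\}$ with $m \le \rho$, telescope: $R({\cal A}_s \cup {\cal A}^*) - R({\cal A}_s) = \sum_{t=1}^{m}\bigl[R({\cal A}_s \cup \{y_1,\dots,y_t\}) - R({\cal A}_s \cup \{y_1,\dots,y_{t-1}\})\bigr]$. By submodularity (Theorem~1, applied with the base set ${\cal A}_s$ inside the larger set ${\cal A}_s \cup \{y_1,\dots,y_{t-1}\}$), each summand is at most $R({\cal A}_s \cup \{y_t\}) - R({\cal A}_s)$, which in turn is at most the greedy marginal gain $R({\cal A}_{s+1}) - R({\cal A}_s)$ by the greedy choice rule. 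Summing $m \le \rho$ such terms gives the single-step inequality. Second, I would turn it into a recursion on the gap $\delta_s \triangleq \textsc{opt} - R({\cal A}_s)$: the inequality reads $\delta_s \le \rho(\delta_s - \delta_{s+1})$, i.e.\ $\delta_{s+1} \le (1 - 1/\rho)\delta_s$. Third, unrolling from $\delta_0 = \textsc{opt}$ (using $R({\cal A}_0) = R(\emptyset) = 0$) yields $\delta_\rho \le (1-1/\rho)^\rho \,\textsc{opt} \le e^{-1}\textsc{opt}$, hence $R({\cal A}_\rho) \ge (1 - 1/e)\,\textsc{opt}$, which is the claim.

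\textbf{Main obstacle.} There is no deep obstacle; the only point that needs care is the correct invocation of submodularity in the telescoping sum — one must check that the nested sets $ {\cal A}_s \cup \{y_1,\dots,y_{t-1}\} \subseteq {\cal A}_s \cup \{y_1,\dots,y_t\}$ and the new element $y_t$ satisfy the hypotheses of Theorem~1, and that $y_t \notin {\cal A}_s \cup \{y_1,\dots,y_{t-1}\}$ so the marginal gain is well-defined. A secondary subtlety is the early-termination case and the boundary case $m < \rho$: both are handled by the non-negativity of marginal gains (a consequence of non-decreasingness), which lets us pad the sum with zero terms without affecting the bound. I would also note that the randomized acceptance step in Algorithm~\ref{alg:sub} (picking $x^*$ with probability $m(\tau^p_{x^*,k})$) does not affect the worst-case guarantee as stated, since the theorem is about the deterministic greedy selection of the maximizer; the temporal-dynamics filtering only reorders/removes candidates and is orthogonal to the combinatorial bound, so I would either remark on this explicitly or simply prove the bound for the pure greedy rule as the cited theorem states.
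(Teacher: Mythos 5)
Your proof is correct: it is the standard Nemhauser--Wolsey--Fisher telescoping argument (monotonicity plus submodularity give the per-step gap reduction $\delta_{s+1} \leq (1-1/\rho)\delta_s$, which unrolls to the $(1-1/e)$ bound), and your handling of early termination and of the randomized acceptance step in Algorithm~\ref{alg:sub} is sound. The paper itself supplies no proof for this theorem---it simply cites \cite{Nemhauser:1978}---and your argument is essentially the one in that cited source, so there is no divergence to report.
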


\subsection{Extensions}

Next we extend Algorithm~\ref{alg:sub} along two directions: (i) handling the case of multiple target papers and (ii) taking account of temporal dynamics of information production-consumption dependency.

\subsubsection*{Multiple Target Papers}
Let
$R(k, {\cal A}) = \ell(\{\varphi_{i,j}\}_{i,j \in {\cal C}_k}) -\ell(\{\varphi_{i,j} - \Delta^{i,j}_{{\cal A}}\}_{i,j \in {\cal C}_k})$. We update the objective function as:
\begin{displaymath}
R({\cal A}) = \sum_{k \in {\cal P}} \lambda_k R(k, {\cal A})
\end{displaymath}
where $\lambda_k$ is paper $k$'s weight, indicating its importance, with $\forall k \in {\cal P}, \lambda_k \geq 0$ and $\sum_{k \in {\cal P}}\lambda_k = 1$.

Under this multi-criterion setting, there may be cases that two solutions ${\cal A}, {\cal A}'$ are incompatible, i.e., $R(k, {\cal A}) > R(k, {\cal A}')$ while $R(k', {\cal A}) < R(k', {\cal A}')$. Instead of looking for the optimal solution, we resort to finding a Pareto-optimal solution~\cite{Boyd:2004:book}. A solution ${\cal A}$ is Pareto-optimal if no other solution ${\cal A}'$ satisfies that (i) $R(k,{\cal A}') \geq R(k,{\cal A})$ for all $k \in {\cal P}$ and $k \neq k'$, and (ii) $R(k',{\cal A}') > R(k',{\cal A})$ for a specific $k' \in {\cal P}$.

As $R(\cdot)$ is a non-negative linear combination of submodular functions, any solution that maximizes $R(\cdot)$ is guaranteed to be Pareto-optimal~\cite{Boyd:2004:book}. Moreover, since the submodularity is closed under the non-negative linear combinations, Algorithm~\ref{alg:sub} can be readily applied.

\subsubsection*{Temporal Dynamics}

Furthermore,  we take into account the temporal dynamics of information production-consumption dependency (cf.~\myref{sec:theory}). To build the temporal dynamics model, for $x \in {\cal C}_k \cap {\cal Q}$, we examine $x$'s publishing time ($t^p_x$) and $k$'s publishing time ($t^p_k$). The difference ($\tau_{x,k}^p = t^p_k - t^p_x$) reflects the temporal gap between $x$ and $k$'s publishing. Figure~\ref{fig:correlation}(b) shows that this interval fits a log-log distribution, denoted by $Pr(\tau)$. By discretizing and normalizing $Pr(\tau)$, we compute the probability that a paper published $\Delta t$ years ago influences a current paper, i.e., $m(\Delta t) = \int_{\tau = \Delta t}^{+\infty} Pr(\tau) \textrm{d}\tau$.
Similar to Algorithm~\ref{alg:sampling}, we incorporate this temporal dynamics into Algorithm~\ref{alg:sub} (line 7-9). We omit the details due to space limitations.

\subsection{Empirical Study}

\begin{figure}
\centering
\epsfig{width = 120mm, file = 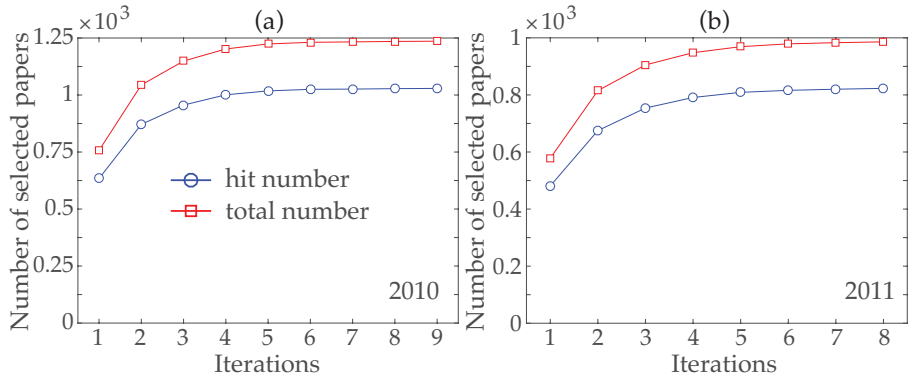}
\caption{Effectiveness of predicting the most influential enablers for the papers published in (a) 2010 and (b) 2011. \label{fig:pick}}
\end{figure}

To validate the effectiveness of our solution, we apply Algorithm~\ref{alg:sub} to predicting the most influential enablers for  papers published by Indiana University in a specific year $t$.
Specifically, we assume that each paper $k \in {\cal P}_t$ is of uniform importance, i.e., $\lambda_k = 1/|{\cal P}|$. We consider all papers published before year $t$ in the \scs{Mag} dataset as the literature space ${\cal S}$ (about 20 million papers for 2010 and 2011). We then measure the precision of our prediction algorithm: the proportion of selected papers ${\cal A}$ that appear in the reading set ${\cal Q}$ of the \scs{Click} dataset, i.e., $|{\cal A} \cap {\cal Q}|/|{\cal A}|$.

Figure~\ref{fig:pick} illustrates the measurement results for $t$ = 2010 and 2011. It is observed that our prediction model is fairly scalable: in both cases, it quickly converges using less than 10 iterations. Furthermore, among the selected papers ${\cal A}$, most of them indeed appear in the reading set ${\cal Q}$ (with precision over 83\% in both cases), highlighting the effectiveness of our predictive model.

\section{Conclusion \& Discussion}
\label{sec:conclusion}

In this work, we conducted a systematic study on creativity in scientific enterprise. For the first time, by directly correlating authors' raw information consumption behaviors with their knowledge products, we found remarkable predictability in scientific creative processes: of over 59.0\% papers across all scientific fields, 25.7\% of their creativity can be readily explained by information consumed by their authors. Leveraging these findings, we proposed a predictive framework that captures the impact of authors' information consumption over their future knowledge products. By using two web-scale, longitudinal real datasets, we demonstrated the efficacy of our framework in identifying the most critical knowledge to fostering target scientific innovations. Our framework is not limited to scientific creative processes only. Indeed, its mechanistic nature makes it potentially applicable for describing creative processes in other domains as well, such as musical, artistic and linguistics creativity.

This work also opens up several directions that are worth future investigations. For example, due to privacy and technology constraints, our study tracks information consumption and knowledge production at an organizational level. Thus, extending such study to an individual level could be fruitful and potentially shed new light on the nature of creativity. Furthermore, recent work has shown that various semantic features (e.g., author, content, venue) can be used to predict long-term impact of scientific artifacts in their early stages. Such semantic features could be integrated into our framework to train microscopic (author-, content-, and venue-specific) creativity models. Lastly, our model makes falsifiable prediction for creative processes, making it a viable candidate to assess and guide experimental studies, results of which can feed back to and improve the model with more accurate and realistic predictions.


\end{document}